\documentclass[11pt,oneside]{article}

\usepackage[T1]{fontenc}
\usepackage[margin=1in]{geometry}
\usepackage{mathpazo}
\usepackage{xcolor}

\newcommand{\red}[1]{\textcolor{red}{#1}}

\newif\ifshowtodos
\showtodostrue

\usepackage{booktabs}
\usepackage{caption}
\usepackage{graphicx}

\let\oldincludegraphics\includegraphics%
\renewcommand{\includegraphics}[2][]{\IfFileExists{#2}{\oldincludegraphics[#1]{#2}}{\red{[FILE NOT FOUND]}}}

\usepackage{amsmath}
\usepackage{amssymb}
\usepackage{amsthm}

\DeclareMathOperator{\E}{\mathbb{E}}
\DeclareMathOperator{\Var}{\mathbb{V}}

\newcommand{\cost}{C}
\newcommand{\der}{\mathrm{d}}
\newcommand{\derfrac}[2]{\frac{\der #1}{\der #2}}
\newcommand{\Hcal}{\mathcal{H}}
\newcommand{\N}{\mathcal{T}}
\newcommand{\Ncal}{\mathcal{N}}
\newcommand{\parfrac}[2]{\frac{\partial #1}{\partial #2}}
\newcommand{\pv}{\Psi}
\newcommand{\R}{\mathbb{R}}
\newcommand{\vm}{\sigma^2}
\newcommand{\vz}{\sigma_0^2}
\renewcommand{\epsilon}{\varepsilon}

\let\oldleft\left
\let\oldright\right
\renewcommand{\left}{\mathopen{}\mathclose\bgroup\oldleft}
\renewcommand{\right}{\aftergroup\egroup\oldright}

\newtheorem{lemma}{Lemma}
\newtheorem{proposition}{Proposition}

\newif\ifbodyproofs
\bodyproofstrue

\usepackage{needspace}
\AtBeginEnvironment{lemma}{\Needspace*{6\baselineskip}}
\AtBeginEnvironment{proposition}{\Needspace*{6\baselineskip}}
\AtBeginEnvironment{theorem}{\Needspace*{6\baselineskip}}

\usepackage{natbib}
\newcommand\citeapos[1]{\citeauthor{#1}'s (\citeyear{#1})}

\bodyproofsfalse

\title{Persistence, patience and costly information acquisition}
\author{%
Benjamin Davies\thanks{
Department of Economics, Stanford University; bldavies@stanford.edu.
}
}
\date{Draft version: \today}

\begin{document}

\thispagestyle{empty}

\maketitle

\begin{abstract}
    \noindent
    A forward-looking agent observes signals of a state that follows a Gaussian AR(1) process.
    He balances the cost of having imprecise beliefs with the cost of acquiring more precise signals.
    I characterize his optimal information acquisition policy, and analyze how his steady-state beliefs and costs depend on persistence (the AR(1) parameter) and patience (the agent's discount factor).
    Higher persistence has a non-monotone effect on belief precision and raises overall costs.
    Higher patience makes beliefs more precise and lowers overall costs.

    \vskip\baselineskip
    \noindent{\itshape Keywords}: AR(1) process, Bayesian learning, information acquisition, patience, persistence\par
    \noindent{\itshape JEL classification}: C61, D83
\end{abstract}

\clearpage
\setcounter{page}{1}
\section{Introduction}

Investors, central banks, and many other agents face a common challenge: learning about unknown states that change over time.
These agents must decide how much to learn given the cost of information and its gradual obsolescence.
How is this decision shaped by a state's persistence and the agent's patience?

To answer this question, I consider a forward-looking agent who observes signals of a payoff-relevant state that follows a Gaussian AR(1) process.
He balances the instrumental cost of having imprecise beliefs with the cost of acquiring more precise signals.
His optimal policy minimizes the present value of overall costs.
I characterize this policy, and analyze how steady-state beliefs and costs depend on persistence (the AR(1) parameter) and patience (the agent's discount factor).

Higher persistence has two opposing effects on steady-state beliefs:
\begin{enumerate}

    \item
    shocks propagate further into the future, making beliefs less precise;

    \item
    information becomes obsolete slower, so the agent acquires more, making beliefs more precise.

\end{enumerate}
I show that the first effect dominates the second if and only if persistence is high enough (Proposition~\ref{prop:steady-state-rho}).
However, higher persistence always leads to higher overall costs (Proposition~\ref{prop:steady-state-cost}).
This is because instrumental costs fall by less than acquisition costs rise.
The opposite is true when patience rises: the agent acquires more information, but so do his past selves, and the inherited information lowers his instrumental costs by more than acquisition costs rise.

I obtain these results in a tractable model with Gaussian signals, quadratic payoffs, and linear signal precision costs.
My Gaussian-quadratic setup is standard in the information acquisition literature \citep{Colombo-etal-2014-REStud,Hellwig-Veldkamp-2009-REStud,Myatt-Wallace-2012-REStud}.
Linear costs distinguish my model from the rational inattention literature, which typically assumes entropy-based costs \citep{Mackowiak-Wiederholt-2009-AER,Sims-2003-JME}.
Such costs are prior-dependent, complicating dynamic analyses because each signal changes future priors and, thus, future costs.
In contrast, linear costs are prior-independent and admit a closed-form solution to the agent's dynamic problem.%
\footnote{
Moreover, \cite{Pomatto-etal-2023-AER} show that linear costs uniquely satisfy a set of appealing axioms.
}

The closest papers to mine are \cite{Steiner-etal-2017-ECTA}, \cite{Weber-Nguyen-2018-EJOR}, \cite{Immorlica-etal-2021-LeibnizInt.Proc.Inform.LIPIcs}, and \cite{Barilla-2025-}.
All study optimal learning about a time-varying state.
\cite{Steiner-etal-2017-ECTA} assume entropy-based costs, while \cite{Weber-Nguyen-2018-EJOR} assume linear costs.
Neither paper considers the effects of persistence and patience.
\cite{Immorlica-etal-2021-LeibnizInt.Proc.Inform.LIPIcs} study precision allocation under a budget constraint, removing the trade-off between instrumental and acquisition costs that I analyze.
\citeapos{Barilla-2025-} model has a binary state and endogenous signal timing.
My focus on a continuous state and exogenous timing leads to different insights.

% \clearpage
\section{Model}
\label{sec:model}

\paragraph{Environment}

Time is discrete and indexed by~$t\in\N\equiv\{1,2,\ldots\}$.
The state follows a Gaussian AR(1) process~$(\theta_t)_{t\in\N}$ with autocorrelation~$\rho\in(0,1)$, initial value~$\theta_0\sim\Ncal(0,\vz)$, and independent shocks~$\eta_t\sim\Ncal(0,\vm)$:%
\footnote{
An early version of this paper \citep{Davies-2026-} focused on the limiting case when~$\rho\to1$.
}
\[ \theta_t=\rho\theta_{t-1}+\eta_t. \]
I assume~$\vz\ge0$ and~$\vm>0$.

\paragraph{Information}

At each time~$t$, the agent generates a Gaussian signal~$s_t$ of~$\theta_t$ with precision~$x_t\ge0$:
\[ s_t\mid\theta_t\sim\Ncal\left(\theta_t,\,\frac{1}{x_t}\right). \]
He observes the signal history~$\Hcal_t\equiv(s_1,\ldots,s_t)$ and uses it to form posterior beliefs about~$\theta_t$.

\paragraph{Prediction and posterior variances}

Let~$\Hcal_0=\{\}$, and let~$P_t\equiv\Var(\theta_t\mid\Hcal_{t-1})$ and~$V_t\equiv\Var(\theta_t\mid\Hcal_t)$ be the posterior variances of~$\theta_t$ before and after observing~$s_t$.
Well-known results for Gaussian random variables imply
\[ V_t=\left(\frac{1}{P_t}+x_t\right)^{-1}. \]
So given~$P_t>0$, every precision~$x_t\ge0$ corresponds to a unique posterior variance~$V_t\in(0,P_t]$.
I call~$P_t$ the ``prediction variance'' because it is the variance of the prediction of~$\theta_t$ from time~$(t-1)$.

\paragraph{Costs}

After observing~$\Hcal_t$, the agent takes an action~$a_t\in\R$ with cost~$(a_t-\theta_t)^2$.
He minimizes the expected cost by choosing~$a_t$ to equal the posterior mean~$\E[\theta_t\mid\Hcal_t]$.
This yields minimized cost
\[ \min_{a'\in\R}\E[(a_t-\theta_t)^2\mid\Hcal_t]=V_t. \]
Thus, the agent incurs ``instrumental costs'' from having imprecise beliefs.
He also incurs ``acquisition costs'' from generating precise signals.
His overall cost at time~$t$ equals
\[ \cost(V_t,P_t)\equiv V_t+c\left(\frac{1}{V_t}-\frac{1}{P_t}\right), \]
where~$c>0$ is the marginal cost of raising the precision~$x_t=1/V_t-1/P_t$ of the signal~$s_t$.
\iffalse
\footnote{
The cost function~$cx_t$ maps to a setting where the agent can buy any non-negative real number of iid signals at a constant per-unit price.
For example, if each signal has precision~$1/\sigma_\epsilon^2\ge0$ and price~$\kappa>0$, then~$m\ge0$ signals have combined precision~$x\equiv m/\sigma_\epsilon^2$ and cost~$\kappa m=cx$ with~$c\equiv\kappa\sigma_\epsilon^2$.
}
\fi

\paragraph{Choices}

The agent has discount factor~$\delta\in[0,1)$.
At each time~$t$, he chooses the posterior variance~$V_t$ (by choosing a corresponding precision~$x_t$) that minimizes the present value
\[ \cost(V_t,P_t)+\sum_{\tau=1}^\infty\delta^\tau\cost(V_{t+\tau},P_{t+\tau}) \]
of his overall costs, given
the variances~$(V_\tau)_{\tau<t}$ he chose at past times and
the variances~$(V_\tau)_{\tau>t}$ he anticipates choosing at future times.

Since~$P_t$ is a sufficient statistic for~$\Hcal_{t-1}$ and since the AR(1) structure implies~$P_{t+1}=\rho^2V_t+\vm$, the minimized present value~$\pv(P_t)$ depends on~$P_t$ only and satisfies the Bellman equation
\begin{equation}
    \label{eq:bellman}
    \pv(P_t)=\min_{V_t\in(0,P_t]}\left\{\cost(V_t,P_t)+\delta\pv(\rho^2V_t+\vm)\right\}.
\end{equation}
So the agent's time~$t$ problem is to choose, given~$P_t$, the posterior variance~$V_t$ that minimizes the sum of the current cost~$\cost(V_t,P_t)$ and continuation value~$\delta\pv(\rho^2V_t+\vm)$.

% \clearpage
\section{Optimal policy}
\label{sec:optimal-policy}

Lemma~\ref{lem:optimal-policy} characterizes the optimal posterior variances~$(V_t)_{t\in\N}$ in terms of the (endogenous) prediction variances~$(P_t)_{t\in\N}$.

\begin{lemma}
    \label{lem:optimal-policy}
    Let~$V^*$ be the unique positive solution to
    \begin{equation}
        \label{eq:foc}
        \frac{1}{(V^*)^2}-\frac{\delta\rho^2}{\left(\rho^2V^*+\vm\right)^2}=\frac{1}{c}.
    \end{equation}
    At each time~$t$, the agent optimally chooses~$V_t=\min\left\{P_t,V^*\right\}$ by generating a signal with precision
    \[ x_t=\max\left\{0,\frac{1}{V^*}-\frac{1}{P_t}\right\}. \]
\end{lemma}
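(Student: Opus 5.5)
Because the cost is linear in precision, the problem is a "stock" problem: choosing $V_t$ costs $V_t+c/V_t-c/P_t$, and $P_t$ enters only through the additive term $-c/P_t$ and the constraint $V_t\le P_t$. The plan is to rewrite the discounted objective so that, apart from a constant fixed by $P_1$, it is a sum of identical period-by-period functions of $V_t$. Then each $V_t$ can be optimized pointwise, and the constraint is handled by truncation.

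First, using $P_{t+1}=\rho^2V_t+\vm$, I collect terms in the present value from time $t$ on:
\[
\sum_{\tau\ge0}\delta^\tau\cost(V_{t+\tau},P_{t+\tau})
=-\frac{c}{P_t}+\sum_{\tau\ge0}\delta^\tau g(V_{t+\tau}),
\qquad
g(V)\equiv V+\frac{c}{V}-\frac{\delta c}{\rho^2V+\vm}.
\]
The rearrangement is legitimate because all $V_\tau\le\max\{P_1,\vm/(1-\rho^2)\}$ is bounded at any candidate optimum (bounding $P_\tau$), and costs are bounded below. So the problem is to minimize $\sum_\tau\delta^\tau g(V_{t+\tau})$ subject to $V_s\le P_s$ for all $s$, where each $P_{s+1}$ depends on $V_s$.

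Second, I analyze $g$ on $(0,\infty)$. Its derivative is
\[
g'(V)=1-\frac{c}{V^2}+\frac{\delta c\rho^2}{(\rho^2V+\vm)^2}.
\]
Setting $g'(V)=0$ gives exactly \eqref{eq:foc}. Define $h(V)=1/V^2-\delta\rho^2/(\rho^2V+\vm)^2$, so that $g'(V)=1-c\,h(V)$. I will show $h$ is strictly decreasing from $+\infty$ to $0$. Positivity of $h$ follows from $\rho^2V+\vm>\rho V\ge\sqrt{\delta}\rho V$. For monotonicity,
\[
h'(V)=-\frac{2}{V^3}+\frac{2\delta\rho^4}{(\rho^2V+\vm)^3}<0
\]
holds because $(\rho^2V+\vm)^3>\rho^6V^3\ge\delta\rho^4V^3$. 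Hence $V^*$ exists and is unique, $g$ is strictly decreasing on $(0,V^*]$ and strictly increasing on $[V^*,\infty)$, and $g$ is quasi-convex with its minimum at $V^*$.

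Third, I handle the constraints. Consider the candidate policy $V_s=\min\{P_s,V^*\}$. If $V^*<P_s$, it attains the unconstrained minimum of $g$. If $P_s\le V^*$, then any feasible $V_s\le P_s\le V^*$ satisfies $g(V_s)\ge g(P_s)$, so choosing $V_s=P_s$ is pointwise optimal.

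The main obstacle is that the feasible set at time $s+1$ depends on the earlier choice $V_s$, through $P_{s+1}=\rho^2V_s+\vm$. So pointwise optimality at each date must be checked against the possibility that a different earlier choice enlarges a later feasible set. I would handle this by monotonicity. Choosing a smaller $V_s$ lowers $P_{s+1}$ and shrinks the later feasible set, so a deviation to a smaller $V_s$ only adds constraints. A larger $V_s$ is impossible when the constraint binds, and is worse by quasi-convexity when it does not. Formally, I would compare any feasible sequence $(V'_s)$ with the candidate $(V_s)$ and show $g(V'_s)\ge g(V_s)$ termwise. Where $V_s=V^*$, this is immediate. Where $V_s=P_s<V^*$, I need $V'_s\le P'_s\le P_s$. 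I will prove this by induction: the candidate has the maximal $P_s$ among feasible paths, since $V_s=\min\{P_s,V^*\}$ and either $V_s\ge V^*\wedge V'_s$, or $V_s=P_s\ge P'_s\ge V'_s$.

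The induction is slightly delicate when $V'_s>V^*$, because then $P'_{s+1}$ can exceed $P_{s+1}$. In that case I would instead bound termwise on the set $\{V'_s\le V^*\}$, and note that once $P_{s}\ge V^*$ the candidate is unconstrained thereafter; indeed $\rho^2V^*+\vm>V^*$ should follow from \eqref{eq:foc}, which I would verify. Finally, converting $V_t$ back to precision gives $x_t=1/V_t-1/P_t=\max\{0,1/V^*-1/P_t\}$.
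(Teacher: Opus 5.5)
\textbf{Main gap.} Your telescoping reduction to $-c/P_t+\sum_{\tau\ge0}\delta^\tau g(V_{t+\tau})$ is correct. It is also a more global route than the paper's envelope-theorem and first/second-order-condition argument. However, the step you call ``slightly delicate'' is a genuine gap, and no argument can close it without an extra hypothesis.

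The inequality $\rho^2V^*+\vm>V^*$ does not follow from \eqref{eq:foc}. It is equivalent to $V^*<\vm/(1-\rho^2)$, i.e.\ (Lemma~\ref{lem:steady-state-condition-primitives}) to the small-cost condition \eqref{eq:steady-state-condition-primitives}. If instead $\rho^2V^*+\vm<V^*$, $\delta>0$ and $P_t>V^*$, the claimed policy is not optimal. The candidate sets $V_t=V^*$ and then $V_{t+k}=P_{t+k}<V^*$ for all $k\ge1$. Now raise $V_t$ to $V^*+\epsilon$ and still acquire nothing later, which is also what the lemma prescribes from $t+1$ on. This changes $g(V_t)$ only by $O(\epsilon^2)$, since $g'(V^*)=0$. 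But for small $\epsilon>0$ it raises each $V_{t+k}$ by $\rho^{2k}\epsilon$ inside $(0,V^*)$, where $g'<0$, so total cost falls at first order.

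For instance, take $\rho=1/2$, $\vm=1$, $\delta=0.9$, $c=100$ and $\vz=396$, so $P_1=100$. Then $V^*\approx3.80>4/3=\vm/(1-\rho^2)$, and choosing $V_1=4.8$ instead of $V^*$ lowers the present value of costs from about $41.8$ to about $37.6$. In this regime the optimal threshold is $\sqrt{c(1-\delta\rho^2)}$, not $V^*$.

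The paper's proof has the same hidden restriction: using $\pv'(P)=c/P^2$ at $P=\rho^2V^*+\vm$ presumes an interior choice next period. What your argument actually proves is the lemma in two cases:
\begin{itemize}
\item under \eqref{eq:steady-state-condition-primitives}, where the candidate, once at $V^*$, stays there;
\item when $P_t\le V^*$, where (if \eqref{eq:steady-state-condition-primitives} fails) the candidate never acquires information and your induction $P'_s\le P_s$ runs for all $s$.
\end{itemize}
The statement needs such a hypothesis. It is in force in all of the paper's steady-state results.

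\textbf{Secondary error.} Your second step is wrong as written. The inequality $\rho^2V+\vm>\rho V$ fails for large $V$, and $\rho^6\ge\delta\rho^4$ requires $\delta\le\rho^2$. When $\delta>\rho^2$, $h$ is negative for large $V$ and has an interior minimum (the paper's case~(ii)), so it is neither positive nor monotone.

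The conclusion you need survives. Write $h(V)=V^{-2}\bigl[1-\delta\rho^2\bigl(V/(\rho^2V+\vm)\bigr)^2\bigr]$; the bracket is non-increasing in $V$. So $h$ is strictly decreasing on the interval where $h\ge0$ and negative beyond it. Hence $h=1/c$ has exactly one root, with $g'<0$ on $(0,V^*)$ and $g'>0$ on $(V^*,\infty)$. The paper reaches the same conclusion by splitting into the cases $\delta\le\rho^2$ and $\delta>\rho^2$.
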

\ifbodyproofs\subsection{Proof of Lemma~\ref{lem:optimal-policy}}

\begin{proof}[\unskip\nopunct]
    I first show~$V^*>0$ exists and is unique.
    Define the function~$f:(0,\infty)\to\R$ by
    \[ f(V)\equiv\frac{1}{V^2}-\frac{\delta\rho^2}{\left(\rho^2V+\vm\right)^2}. \]
    Then~$f(V)\to\infty$ as~$V\to0$ (because~$1/V^2\to\infty$ while the second term is bounded) and~$f(V)\to0$ as~$V\to\infty$.
    But~$f$ is continuous on its domain and~$0<1/c<\infty$, and so the intermediate value theorem implies there is at least one~$V>0$ such that~$f(V)=1/c$.
    Now~$f$ has derivative
    \[ f'(V)=-2\left(\frac{1}{V^3}-\frac{\delta\rho^4}{\left(\rho^2V+\vm\right)^3}\right). \]
    There are two cases to consider:
    \begin{enumerate}

        \item[(i)]
        Suppose~$\delta\le\rho^2$.
        Then for all~$V>0$ we have
        \[ \rho^2V+\vm>\delta^{1/3}\rho^{4/3}V, \]
        implying~$\delta\rho^4/\left(\rho^2V+\vm\right)^3<1/V^3$ and therefore~$f'(V)<0$.
        So~$f$ is strictly decreasing on its domain and hence~$f(V)=1/c$ has exactly one solution.

        \item[(ii)]
        Now suppose~$\delta>\rho^2$.
        Then~$f'(V_0)=0$ at a strictly positive point
        \[ V_0\equiv\frac{\vm}{\rho^{4/3}\left(\delta^{1/3}-\rho^{2/3}\right)}, \]
        with~$f'(V)<0$ if and only if~$V<V_0$.
        Moreover,
        \begin{align*}
            % \rho^2V+\vm
            % &= \delta^{1/3}\rho^{4/3}V_0 \\
            % \Rightarrow
            f(V_0)
            % &= \frac{1}{V_0^2}-\frac{\delta\rho^2}{\left(\rho^2V_0+\vm\right)^2} \\
            % &= \frac{1}{V_0^2}-\frac{\delta\rho^2}{\delta^{2/3}\rho^{8/3}(V_0)^2} \\
            &= \frac{1-\left(\delta/\rho^2\right)^{1/3}}{(V_0)^2}
        \end{align*}
        is strictly negative because~$\delta>\rho^2$.
        So as~$V$ rises, the value~$f(V)$ drops from~$\infty$ to every positive real number (including~$1/c$) before reaching its negative minimum, then stays negative while rising toward zero.
        Thus~$f(V)=1/c>0$ is satisfied at most once.

    \end{enumerate}
    Together, cases~(i) and~(ii) imply~$V^*>0$ exists and is unique.
    They also imply~$f'(V^*)<0$.

    Now I derive the agent's optimal choice of~$V_t$.
    The minimand on the right-hand side of~\eqref{eq:bellman} is differentiable in~$P_t$, implying~$\pv$ is also differentiable.
    Denoting its derivative by~$\pv'$, assuming an interior minimizer, and applying the envelope theorem gives
    \begin{align*}
        \pv'(P_t)
        &= \parfrac{}{P_t}\left\{\cost(V_t,P_t)+\delta\pv(\rho^2V_t+\vm)\right\} \\
        &= \frac{c}{P_t^2}.
    \end{align*}
    So any interior minimizer satisfies the first-order condition
    \begin{align*}
        0
        &= \parfrac{}{V_t}\left\{\cost(V_t,P_t)+\delta\pv(\rho^2V_t+\vm)\right\} \\
        &= 1-\frac{c}{V_t^2}+\delta\rho^2\pv'(\rho^2V_t+\vm) \\
        % &= 1-c\left(\frac{1}{V_t^2}-\frac{\delta\rho^2}{\rho^2V_t+\vm}\right) \\
        &= c\left(\frac{1}{c}-f(V_t)\right),
    \end{align*}
    which is uniquely satisfied by~$V^*$.
    The second-order condition for a minimizer holds because
    \[ \parfrac{^2}{V_t^2}\left\{\cost(V_t,P_t)+\delta\pv(\rho^2V_t+\vm)\right\}\bigg\vert_{V_t=V^*}=-f'(V^*) \]
    is strictly positive.
    So the agent optimally chooses~$V_t=V^*$ when~$V^*<P_t$ and~$V_t=P_t$ otherwise, yielding the optimal policy~$V_t=\min\{P_t,V^*\}$.
    The expression for~$x_t=1/V_t-1/P_t$ follows.
\end{proof}
\fi

I prove Lemma~\ref{lem:optimal-policy} and all other results in Appendix~\ref{sec:proofs}.

\iffalse
For example, suppose~$(\rho,\vz,\vm,c,\delta)=(0.5,0,1,1,0)$.
Then~\eqref{eq:foc} has unique solution~$V^*=1$.
At time~$t=1$, the state has prediction variance~$P_1=\rho^2\vz+\vm=1$ and so the agent chooses~$V_1=\min\{1,1\}=1$ by generating a signal with precision~$x_1=0$.
At time~$t=2$, the prediction variance is~$P_2=\rho^2V_1+\vm=1.25$ and he chooses~$V_2=\min\{1.25,1\}=1$ via~$x_2=\max\{1/1-1/1.25,0\}=0.2$.
At time~$t=3$, the prediction variance is~$P_3=1.25$ and he chooses~$V_3=1$ via~$x_3=0.2$ again.
In fact he chooses~$V_t=1$ via~$x_t=0.2$ at each time~$t\ge2$; that is, his optimal strategy converges to a steady state.
I provide sufficient conditions for such convergence in Section~\ref{sec:steady-state}.
\fi

Suppose the agent never acquires information: $x_t=0$ for each~$t$.
Then Lemma~\ref{lem:optimal-policy} implies~$V_t=P_t$ for each~$t$, and the AR(1) structure implies~$P_1=\rho^2\vz+\vm$ and~$P_t=\rho^2P_{t-1}+\vm$ for each~$t>1$.
It follows that
\begin{equation}
    \label{eq:prediction-variance-uninformed}
    P_t=\rho^{2t}\vz+\left(\frac{1-\rho^{2t}}{1-\rho^2}\right)\vm
\end{equation}
for each~$t\in\N$, which converges to~$\vm/(1-\rho^2)$ as~$t\to\infty$.
But if
\begin{equation}
    \label{eq:steady-state-condition}
    V^*<\frac{\vm}{1-\rho^2},
\end{equation}
then the sequence~$(P_t)_{t\in\N}$ defined by~\eqref{eq:prediction-variance-uninformed} eventually exceeds~$V^*$, in which case Lemma~\ref{lem:optimal-policy} implies the agent would optimally choose~$V_t=V^*$.

So if~\eqref{eq:steady-state-condition} holds, then there is a least time~$t^*\in\N$ such that the agent optimally chooses~$V_{t^*}=V^*$.
Moreover, the next prediction variance~$P_{t^*+1}=\rho^2V^*+\vm$ exceeds~$V^*$\iffalse (since~\eqref{eq:steady-state-condition} holds if and only if~$V^*<\rho^2V^*+\vm)$\fi, implying~$V_{t^*+1}=V^*$ by Lemma~\ref{lem:optimal-policy} and~$V_t=V^*$ for each~$t\ge t^*$ by induction.
Conversely, if~\eqref{eq:steady-state-condition} does not hold, then~$P_t$ may never fall below~$V^*$ and the agent may optimally never acquire information.

Lemma~\ref{lem:steady-state-condition-primitives} says that~\eqref{eq:steady-state-condition} holds precisely when the marginal precision cost~$c$ is small.
If it is too large, then information is not worth acquiring and the agent optimally remains uninformed.

\begin{lemma}
    \label{lem:steady-state-condition-primitives}
    Define~$V^*$ as in Lemma~\ref{lem:optimal-policy}.
    Then~$V^*<\vm/(1-\rho^2)$ if and only if
    \begin{equation}
        \label{eq:steady-state-condition-primitives}
        c<\frac{\sigma^4}{(1-\delta\rho^2)(1-\rho^2)^2}.
    \end{equation}
\end{lemma}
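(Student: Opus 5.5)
The plan is to use the monotonicity structure of the function $f(V)=1/V^2-\delta\rho^2/(\rho^2V+\vm)^2$ established in the proof of Lemma~\ref{lem:optimal-policy}. That proof shows that $f(V)=1/c$ has a unique positive solution $V^*$. It also shows the shape of $f$: it is strictly decreasing wherever it is positive, and once it becomes nonpositive it stays negative. Case~(ii) already gives this. In case~(i) $f$ is strictly decreasing everywhere and tends to $0$, so it is positive throughout.

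The key claim is then: for any $\bar V>0$, we have $V^*<\bar V$ if and only if $f(\bar V)<1/c$. I would argue this in two directions.

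\emph{If $f(\bar V)<1/c$.} Suppose instead $\bar V\le V^*$. Since $f(V^*)=1/c>0$, the point $V^*$ lies in the region where $f$ is positive. That region is an interval $(0,V_1)$ on which $f$ is strictly decreasing, so $\bar V$ lies in it too, and $f(\bar V)\ge f(V^*)=1/c$. This contradicts the assumption.

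\emph{If $V^*<\bar V$.} If $f(\bar V)\le 0$, then $f(\bar V)<1/c$ trivially. Otherwise $\bar V$ lies in the positive, strictly decreasing region, so $f(\bar V)<f(V^*)=1/c$.

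Next I would apply this with $\bar V=\vm/(1-\rho^2)$ and compute $f(\bar V)$. The key simplification is
\[ \rho^2\bar V+\vm=\frac{\rho^2\vm+(1-\rho^2)\vm}{1-\rho^2}=\frac{\vm}{1-\rho^2}=\bar V. \]
Hence
\[ f(\bar V)=\frac{1-\delta\rho^2}{\bar V^2}=\frac{(1-\delta\rho^2)(1-\rho^2)^2}{\sigma^4}. \]
This is positive, since $\delta\rho^2<1$. Therefore $f(\bar V)<1/c$ is equivalent to $c<\sigma^4/\bigl((1-\delta\rho^2)(1-\rho^2)^2\bigr)$, which is exactly \eqref{eq:steady-state-condition-primitives}.

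The main obstacle is stating the monotonicity equivalence cleanly, since $f$ is not globally monotone when $\delta>\rho^2$. This is handled by the observation that $f$ is strictly decreasing on its positive region, so the only possible problem, $f(\bar V)\le 0$, can occur only on the branch where $V^*<\bar V$ already holds. Since $f(\bar V)>0$ here anyway, one could alternatively note directly that $\bar V$ sits on the decreasing positive branch, which gives the equivalence immediately.
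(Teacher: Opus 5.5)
Your proof is correct and follows the paper's argument. Like the paper, you use that $f$ is strictly decreasing wherever it is positive to reduce $V^*<\vm/(1-\rho^2)$ to a comparison of $f(\bar V)$ with $1/c$, then evaluate $f(\bar V)$ via the fixed-point identity $\rho^2\bar V+\vm=\bar V$. Your direction of the key equivalence ($V^*<\bar V$ iff $f(\bar V)<1/c$) is the right one. The paper's write-up states it as $f(V_0)>1/c$ and then reverses the final comparison with $1/c$ as well, so its two slips cancel; your version avoids both.
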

\ifbodyproofs\subsection{Proof of Lemma~\ref{lem:steady-state-condition-primitives}}

\begin{proof}[\nopunct\unskip]
    Let~$V_0\equiv\vm/(1-\rho^2)$ and let~$f$ be the function defined in the proof of Lemma~\ref{lem:optimal-policy}.
    Then~$f(V)$ is strictly decreasing in~$V>0$ whenever~$f(V)>0$.
    Since~$1/c>0$, it follows that~$V^*<V_0$ if and only if~$f(V_0)>1/c$.
    But~$\rho^2V_0+\vm=V_0$, and so
    \begin{align*}
        f(V_0)
        &= \frac{1}{V_0^2}-\frac{\delta\rho^2}{\left(\rho^2V_0+\vm\right)^2} \\
        % &= \frac{1-\delta\rho^2}{V_0^2} \\
        &= \frac{\left(1-\delta\rho^2\right)\left(1-\rho^2\right)^2}{\sigma^4}
    \end{align*}
    exceeds~$1/c$ if and only if~\eqref{eq:steady-state-condition-primitives} holds.
\end{proof}
\fi

\section{Steady-state analysis}

Suppose~\eqref{eq:steady-state-condition-primitives} holds.
Then, by Lemmas~\ref{lem:optimal-policy} and~\ref{lem:steady-state-condition-primitives}, the agent's optimal policy reaches a steady state in which he sets the posterior variance~$V_t$ equal to the target~$V^*$ defined by~\eqref{eq:foc}.
He maintains~$V_t=V^*$ indefinitely by generating signals with precision%
\footnote{
We have~$x^*>0$ if and only if~$V^*<\rho^2V^*+\vm$, which, by Lemma~\ref{lem:steady-state-condition-primitives}, holds if and only if~\eqref{eq:steady-state-condition-primitives} holds.
}
\begin{equation}
    \label{eq:steady-state-precision}
    x^*\equiv\frac{1}{V^*}-\frac{1}{\rho^2V^*+\vm}.
\end{equation}
I call~$V^*$ the ``steady-state variance'' and~$x^*$ the ``steady-state precision.''

\subsection{How $V^*$ and~$x^*$ depend on primitives}

\begin{figure}
    \centering
    \includegraphics[width=0.75\linewidth]{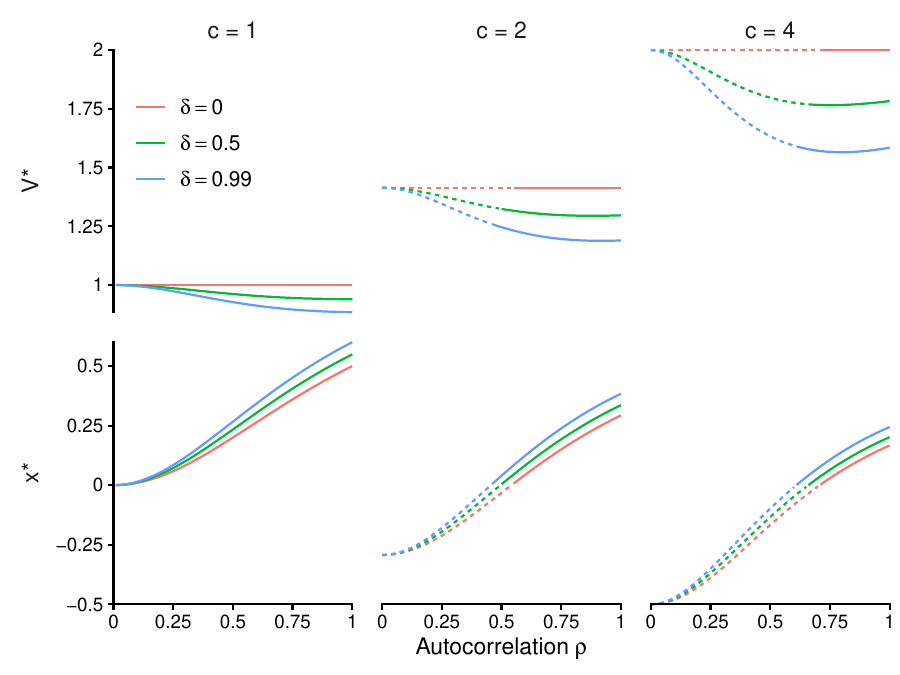}
    \caption{Steady-state variance~$V^*$ and precision~$x^*$ when~$\vm=1$}
    \label{fig:steady-state}
    \caption*{
    \footnotesize
    {\itshape Note:}
    Solid lines are values where~\eqref{eq:steady-state-condition-primitives} holds; dashed lines are values where~\eqref{eq:steady-state-condition-primitives} does not hold.
    }
\end{figure}
Figure~\ref{fig:steady-state} shows that~$V^*$ can rise or fall when the autocorrelation~$\rho$ rises.
If~$\rho$ is small, then most of the state's variation over time comes from the independent shocks, and so signals become obsolete quickly.
As~$\rho$ rises, signals become obsolete slower, and so raising their precision leads to a larger decrease in future instrumental costs.
The agent responds by generating more precise signals, pushing~$V^*$ down.
An opposing force pushes~$V^*$ up: if~$\rho$ rises, then the state inherits more variance from past shocks.
Proposition~\ref{prop:steady-state-rho} says this opposing force dominates precisely when~$\rho$ is large.

\begin{proposition}
    \label{prop:steady-state-rho}
    Let~$\delta>0$ and define
    \[ \rho^*\equiv\sqrt{\frac{\delta}{8}+\sqrt{\frac{\delta^2}{64}+\frac{\sigma^4}{c}}}. \]
    If~$\rho<\rho^*$, then the steady-state variance~$V^*$ is decreasing in the autocorrelation~$\rho$.
    If~$\rho>\rho^*$, then~$V^*$ is increasing in~$\rho$.
\end{proposition}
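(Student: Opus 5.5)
The plan is to treat $V^*$ as an implicit function of $\rho$ through \eqref{eq:foc} and sign its derivative. Write
\[ F(V,\rho)\equiv\frac{1}{V^2}-\frac{\delta\rho^2}{\left(\rho^2V+\vm\right)^2}-\frac{1}{c}, \]
so that $F(V^*,\rho)=0$ and $F_V(V^*,\rho)=f'(V^*)$, where $f$ is the function from the proof of Lemma~\ref{lem:optimal-policy}. That proof shows $f'(V^*)<0$. So the implicit function theorem applies, and
\[ \derfrac{V^*}{\rho}=-\frac{F_\rho}{F_V} \]
has the same sign as $F_\rho(V^*,\rho)$.

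Next I will compute $F_\rho$ directly. Differentiating $\delta\rho^2(\rho^2V+\vm)^{-2}$ in $\rho$ and simplifying should give
\[ F_\rho(V,\rho)=-\frac{2\delta\rho\left(\vm-\rho^2V\right)}{\left(\rho^2V+\vm\right)^3}. \]
Since $\delta>0$ and $\rho>0$, $V^*$ is increasing in $\rho$ exactly when $\rho^2V^*>\vm$, that is, when $V^*>W\equiv\vm/\rho^2$. It is decreasing exactly when $V^*<W$. Intuitively, $W$ is the threshold at which marginally raising $\rho$ starts to raise the next prediction variance $\rho^2V+\vm$ faster than it raises the weight $\delta\rho^2$ on the future.

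The remaining step is to convert the comparison of $V^*$ with $W$ into a condition on primitives. I expect this to be the main obstacle, because $f$ need not be monotone when $\delta>\rho^2$. I will use the shape established in the proof of Lemma~\ref{lem:optimal-policy}. In both cases there, $f$ is strictly decreasing on the region where it is positive, and it never returns above $1/c$ after crossing it. Hence $\{V>0:f(V)>1/c\}=(0,V^*)$, mirroring the argument in Lemma~\ref{lem:steady-state-condition-primitives}, so $V^*>W$ if and only if $f(W)>1/c$. Evaluating at $W$, where $\rho^2W+\vm=2\vm$, gives
\[ f(W)=\frac{\rho^4}{\sigma^4}-\frac{\delta\rho^2}{4\sigma^4}. \]
Thus $f(W)>1/c$ is equivalent to
\[ (\rho^2)^2-\frac{\delta}{4}\rho^2-\frac{\sigma^4}{c}>0. \]
This quadratic in $\rho^2$ has one negative root and one positive root, so the inequality holds iff $\rho^2$ exceeds the positive root $\delta/8+\sqrt{\delta^2/64+\sigma^4/c}$, which is $(\rho^*)^2$. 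Therefore $\rho>\rho^*$ gives $F_\rho>0$ and $V^*$ increasing. By the same chain of equivalences with strict inequalities reversed, $\rho<\rho^*$ gives $V^*<W$ and $V^*$ decreasing.
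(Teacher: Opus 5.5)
Your proposal is correct and follows essentially the same route as the paper. You implicitly differentiate \eqref{eq:foc}, use $f'(V^*)<0$ to reduce the sign of $\der V^*/\der\rho$ to comparing $V^*$ with $\vm/\rho^2$, and then evaluate at $V=\vm/\rho^2$ to get the quadratic in $\rho^2$. Your single-crossing step ($\{V>0:f(V)>1/c\}=(0,V^*)$, so $V^*>\vm/\rho^2$ iff $f(\vm/\rho^2)>1/c$) is a useful addition: the paper only locates the boundary root and leaves implicit which side of $\rho^*$ gives which sign.
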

\ifbodyproofs\subsection{Proof of Proposition~\ref{prop:steady-state-rho}}
\begin{proof}[\unskip\nopunct]
    Differentiating~\eqref{eq:foc} implicitly with respect to~$\rho$ and rearranging the result gives
    \[ \left(\frac{1}{(V^*)^3}-\frac{\delta\rho^4}{\left(\rho^2V^*+\vm\right)^3}\right)\derfrac{V^*}{\rho}=\frac{\delta\rho\left(\rho^2V^*-\vm\right)}{\left(\rho^2V^*+\vm\right)^3}. \]
    The bracketed term on the left-hand side equals~$-f'(V^*)/2$, where~$f$ is the function defined in the proof of Lemma~\ref{lem:optimal-policy}.
    Part of that proof showed~$f'(V^*)<0$.
    So the bracketed term is positive, and hence~$\der V^*/\der\rho<0$ if and only if
    \begin{equation}
        \label{eq:steady-state-rho-condition-proof}
        \rho^2<\frac{\vm}{V^*}
    \end{equation}
    But~$V^*$ is endogenous: it depends on the primitives~$\delta$, $\rho$, $\vm$, and~$c$.
    To express the right-hand side of~\eqref{eq:steady-state-rho-condition-proof} in terms of these primitives, substitute~$V^*=\vm/\rho^2$ into~\eqref{eq:foc} to obtain
    \begin{align*}
        \frac{1}{c}
        &= \frac{1}{(\vm/\rho^2)^2}-\frac{\delta\rho^2}{\left(\rho^2(\vm/\rho^2)+\vm\right)^2} \\
        &= \frac{(4\rho^2-\delta)\rho^2}{4\sigma^4}.
    \end{align*}
    Rearranging this equation yields a quadratic
    \[ 0=4\rho^4-\delta\rho^2-\frac{4\sigma^4}{c} \]
    in~$\rho^2$, with roots
    \[ \rho^2=\frac{\delta}{8}\pm\sqrt{\frac{\delta^2}{64}+\frac{\sigma^4}{c}}. \]
    The negative root is infeasible; rejecting it yields the result.
\end{proof}
\fi

\iffalse
For example, suppose~$(\vm,c,\delta)=(1,4,0.99)$.
Then~$\rho^*\approx0.8$, which is where the bottom curve in the top-right panel of Figure~\ref{fig:steady-state} attains its minimum.
\fi

If~$\delta>0$, then the agent has an incentive to lower future instrumental costs by raising current acquisition costs.
Letting~$\delta=0$ removes this incentive and makes~$V^*=\sqrt{c}$ constant in~$\rho$.
Raising the discount factor~$\delta$ makes the agent care more about his future instrumental costs.
He lowers these costs by learning more about the state's current value, lowering~$V^*$ by raising~$x^*$.
Figure~\ref{fig:steady-state} illustrates this behavior for specific parameter values; Proposition~\ref{prop:steady-state-delta} establishes the behavior for \emph{all} parameter values.

\begin{proposition}
    \label{prop:steady-state-delta}
    Suppose~\eqref{eq:steady-state-condition-primitives} holds.
    Then the steady-state variance~$V^*$ is decreasing in the discount factor~$\delta$, whereas the steady-state precision~$x^*$ is increasing in~$\delta$.
\end{proposition}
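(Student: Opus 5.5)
The plan is implicit differentiation of \eqref{eq:foc} with respect to~$\delta$, mirroring the proof of Proposition~\ref{prop:steady-state-rho}, followed by the chain rule applied to~\eqref{eq:steady-state-precision}. Let~$f$ be the function from the proof of Lemma~\ref{lem:optimal-policy}, now viewed as depending on~$\delta$ too. Its partial derivative in~$\delta$ is
\[ \parfrac{f}{\delta}=-\frac{\rho^2}{\left(\rho^2V+\vm\right)^2}<0. \]
Since~$V^*$ solves~$f(V^*)=1/c$ and the right-hand side does not depend on~$\delta$, implicit differentiation gives
\[ f'(V^*)\,\derfrac{V^*}{\delta}=\frac{\rho^2}{\left(\rho^2V^*+\vm\right)^2}. \]
The proof of Lemma~\ref{lem:optimal-policy} showed~$f'(V^*)<0$. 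Hence~$\der V^*/\der\delta<0$, so~$V^*$ is strictly decreasing in~$\delta$. The implicit function theorem applies because~$f$ is smooth in~$(V,\delta)$ and~$f'(V^*)\neq0$. This step needs nothing beyond~$\rho>0$.

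For~$x^*$, I would write~$x^*=g(V^*)$, where
\[ g(V)\equiv\frac{1}{V}-\frac{1}{\rho^2V+\vm}. \]
It then suffices to show~$g'(V^*)<0$, since then~$\der x^*/\der\delta=g'(V^*)\,\der V^*/\der\delta>0$. Compute
\[ g'(V)=-\frac{1}{V^2}+\frac{\rho^2}{\left(\rho^2V+\vm\right)^2}. \]
So~$g'(V^*)<0$ if and only if~$\left(\rho^2V^*+\vm\right)^2>\rho^2(V^*)^2$, that is, $\rho^2V^*+\vm>\rho V^*$. This holds for every~$V^*>0$, because~$\rho^2V^*+\vm>\vm>0$ and~$\rho V^*>0$ do not immediately compare. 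I would instead argue as follows: \eqref{eq:steady-state-condition-primitives} and Lemma~\ref{lem:steady-state-condition-primitives} give~$V^*<\vm/(1-\rho^2)$, which is equivalent to~$V^*<\rho^2V^*+\vm$. Combined with~$\rho<1$, this yields~$\rho V^*<V^*<\rho^2V^*+\vm$, as needed.

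This last step is where the hypothesis~\eqref{eq:steady-state-condition-primitives} enters. It is also the only point needing care, because outside that region~$x^*$ is not the steady-state precision (it would be non-positive) and the monotonicity claim is not meaningful. The main (mild) obstacle is therefore making sure the sign of~$g'$ is established using the steady-state condition rather than attempted unconditionally. Everything else is routine differentiation.
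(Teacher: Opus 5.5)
Your proposal is correct and takes essentially the same route as the paper. You differentiate \eqref{eq:foc} implicitly in $\delta$ and use $f'(V^*)<0$ from Lemma~\ref{lem:optimal-policy}; you then apply the chain rule to $x^*$, signing $1/(V^*)^2-\rho^2/(\rho^2V^*+\sigma^2)^2$ via $V^*<\rho^2V^*+\sigma^2$ (Lemma~\ref{lem:steady-state-condition-primitives}) and $\rho<1$. One small cleanup: delete the clause asserting that $\rho^2V^*+\sigma^2>\rho V^*$ ``holds for every $V^*>0$''; it is false for large $V^*$ and is superseded by your correct argument that follows.
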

\ifbodyproofs\subsection{Proof of Proposition~\ref{prop:steady-state-delta}}

\begin{proof}[\unskip\nopunct]
    Differentiating~\eqref{eq:foc} implicitly with respect to~$\delta$ and rearranging the result gives
    \begin{equation}
        \label{eq:dV/ddelta}
        -2\left(\frac{1}{(V^*)^3}-\frac{\delta\rho^4}{\left(\rho^2V^*+\vm\right)^3}\right)\derfrac{V^*}{\delta}=\frac{\rho^2}{\left(\rho^2V^*+\vm\right)^2}.
    \end{equation}
    The term multiplying~$\der V^*/\der\delta$ on the left-hand side equals~$f'(V^*)$, where~$f$ is the function defined in the proof of Lemma~\ref{lem:optimal-policy}.
    Part of that proof showed~$f'(V^*)<0$.
    So~$\der V^*/\der\delta$ has the opposite sign to the right-hand side of~\eqref{eq:dV/ddelta}, which is strictly positive.
    Thus~$V^*$ is decreasing in~$\delta$.

    Differentiating~\eqref{eq:steady-state-precision} with respect to~$\delta$ gives
    \[ \derfrac{x^*}{\delta}=-\left(\frac{1}{(V^*)^2}-\frac{\rho^2}{\left(\rho^2V^*+\vm\right)^2}\right)\derfrac{V^*}{\delta}. \]
    But Lemma~\ref{lem:steady-state-condition-primitives} implies~$V^*<\rho^2V^*+\vm$.
    Also~$\rho^2\in(0,1)$.
    Thus
    \[ \frac{1}{(V^*)^2}>\frac{\rho^2}{\left(\rho^2V^*+\vm\right)^2}. \]
    It follows that~$\der x^*/\der\delta$ and~$\der V^*/\der\delta$ have opposite signs, and so~$x^*$ is increasing in~$\delta$.
\end{proof}
\fi

Figure~\ref{fig:steady-state} also shows that~$V^*$ and~$x^*$ depend on the marginal precision cost~$c$.
Raising~$c$ makes information more expensive, so the agent acquires less and maintains less precise beliefs:

\begin{proposition}
    \label{prop:steady-state-c}
    Suppose~\eqref{eq:steady-state-condition-primitives} holds.
    Then the steady-state variance~$V^*$ is increasing in the marginal precision cost~$c$, whereas the steady-state precision~$x^*$ is decreasing in~$c$.
\end{proposition}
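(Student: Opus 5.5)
I will mirror the argument used for Proposition~\ref{prop:steady-state-delta}, replacing $\delta$ by $c$ as the parameter being varied. The plan has two steps: first sign $\der V^*/\der c$ by implicit differentiation of~\eqref{eq:foc}, and then transfer that sign to $x^*$ through~\eqref{eq:steady-state-precision}.

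\textbf{Step 1: the sign of $\der V^*/\der c$.} Let $f$ be the function defined in the proof of Lemma~\ref{lem:optimal-policy}, so that~\eqref{eq:foc} reads $f(V^*)=1/c$. The function $f$ depends on $\delta$, $\rho$ and $\vm$ but not on $c$. Differentiating implicitly in $c$ gives
\[ f'(V^*)\derfrac{V^*}{c}=-\frac{1}{c^2}. \]
The proof of Lemma~\ref{lem:optimal-policy} showed that $f'(V^*)<0$, and $V^*$ is the unique positive root. By the implicit function theorem, $V^*$ is therefore differentiable in $c$, and
\[ \derfrac{V^*}{c}=-\frac{1}{c^2f'(V^*)}>0. \]
Hence $V^*$ is increasing in $c$.

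As a check, the same conclusion follows without calculus. On the region where $f>0$, the proof of Lemma~\ref{lem:optimal-policy} shows that $f$ is strictly decreasing. Raising $c$ lowers the target value $1/c$, so the point where $f$ crosses $1/c$ must move to the right.

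\textbf{Step 2: the sign of $\der x^*/\der c$.} Differentiating~\eqref{eq:steady-state-precision} in $c$, with $c$ entering only through $V^*$, gives
\[ \derfrac{x^*}{c}=-\left(\frac{1}{(V^*)^2}-\frac{\rho^2}{\left(\rho^2V^*+\vm\right)^2}\right)\derfrac{V^*}{c}. \]
Under~\eqref{eq:steady-state-condition-primitives}, Lemma~\ref{lem:steady-state-condition-primitives} gives $V^*<\rho^2V^*+\vm$. Combined with $\rho^2<1$, this shows the bracketed term is strictly positive, exactly as in the proof of Proposition~\ref{prop:steady-state-delta}. So $\der x^*/\der c$ has the opposite sign to $\der V^*/\der c$, and $x^*$ is decreasing in $c$.

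\textbf{Main obstacle.} There is no real obstacle here. The only delicate point is the legitimacy of implicit differentiation, which depends on $f'(V^*)\neq0$. Lemma~\ref{lem:optimal-policy} already supplies the stronger fact $f'(V^*)<0$, and the monotonicity argument in Step~1 gives the result for $V^*$ even without differentiability.
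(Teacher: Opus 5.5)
Your proposal is correct and follows essentially the same route as the paper. Both implicitly differentiate \eqref{eq:foc} in $c$ and use $f'(V^*)<0$ from Lemma~\ref{lem:optimal-policy} to sign $\der V^*/\der c$, then pass the opposite sign to $x^*$ via the strictly positive bracket $1/(V^*)^2-\rho^2/(\rho^2V^*+\vm)^2$; your calculus-free monotonicity check for $V^*$ is valid but not needed.
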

\ifbodyproofs\subsection{Proof of Proposition~\ref{prop:steady-state-c}}

\begin{proof}[\unskip\nopunct]
    Differentiating~\eqref{eq:foc} implicitly with respect to~$c$ gives
    \begin{equation}
        \label{eq:dV/dc}
        -2\left(\frac{1}{(V^*)^3}-\frac{\delta\rho^4}{\left(\rho^2V^*+\vm\right)^3}\right)\derfrac{V^*}{c}=-\frac{1}{c^2}.
    \end{equation}
    From the proof of Proposition~\ref{prop:steady-state-delta}, we know the term multiplying~$\der V^*/\der c$ on the left-hand side is strictly negative.
    So~$\der V^*/\der c$ has the opposite sign to the right-hand side of~\eqref{eq:dV/dc}, which is strictly negative.
    Thus~$V^*$ is increasing in~$c$.

    Differentiating~$x^*$ with respect to~$c$ gives
    \[ \derfrac{x^*}{c}=-\left(\frac{1}{(V^*)^2}-\frac{\rho^2}{\left(\rho^2V^*+\vm\right)^2}\right)\derfrac{V^*}{c}. \]
    From the proof of Proposition~\ref{prop:steady-state-delta}, we know the term multiplying~$\der V^*/\der c$ on the right-hand side is strictly negative.
    Thus~$\der x^*/\der c$ and~$\der V^*/\der c$ have opposite signs, and so~$x^*$ is decreasing in~$c$.
\end{proof}
\fi

\subsection{Steady-state cost}

Finally, I define the ``steady-state cost''
\begin{align*}
    \cost^*
    &\equiv \cost(V^*,\rho^2V^*+\vm) \\
    &= V^*+cx^*
\end{align*}
as the overall cost the agent incurs at each time in steady state.
He is better off when~$\cost^*$ is lower.
\iffalse
\footnote{
Indeed, substituting~$P_t=P^*\equiv\rho^2V^*+\vm$ into the Bellman equation~\eqref{eq:bellman} gives
%
\begin{align*}
    \pv(P^*)
    &= \min_{V\in(0,P^*]}\left\{\cost(V,P^*)+\delta\pv(\rho^2V+\vm)\right\} \\
    &= \cost^*+\delta\pv(P^*).
\end{align*}
%
So~$\cost^*=(1-\delta)\pv(P^*)$ is proportional to the present value of the agent's overall costs in the steady state.
% Intuitively, since the optimal policy is stationary in steady state, the normalized continuation value equals the per-period cost.
}
\fi
The non-monotonicity of~$V^*$ in~$\rho$ suggests~$\cost^*$ could also be non-monotone.
Likewise, since~$V^*$ and~$x^*$ move in different directions when~$\delta$ or~$c$ rise, the net effect on~$\cost^*$ is potentially ambiguous.
Proposition~\ref{prop:steady-state-cost} says the effects are, in fact, unambiguous: $\cost^*$ is monotone in \emph{all} the primitives that determine~$V^*$.

\begin{proposition}
    \label{prop:steady-state-cost}
    Suppose~\eqref{eq:steady-state-condition-primitives} holds.
    Then the steady-state cost~$\cost^*$ is
    \begin{enumerate}
        
        \item[(i)]
        increasing in the autocorrelation~$\rho$,
        
        \item[(ii)]
        decreasing in the discount factor~$\delta$,

        \item[(iii)]
        increasing in the marginal precision cost~$c$, and

        \item[(iv)]
        increasing in the shock variance~$\vm$.

    \end{enumerate}
\end{proposition}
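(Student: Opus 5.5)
The plan is to treat the steady-state cost as a function of the target variance and the primitives, and to differentiate it using the chain rule together with the first-order condition~\eqref{eq:foc}. Write $P\equiv\rho^2V+\vm$ and
\[ G(V;\rho,\delta,c,\vm)\equiv V+c\left(\frac{1}{V}-\frac{1}{\rho^2V+\vm}\right), \]
so that $\cost^*=G(V^*;\cdot)$. For any primitive $\theta$,
\[ \derfrac{\cost^*}{\theta}=\parfrac{G}{\theta}+\parfrac{G}{V}\derfrac{V^*}{\theta}. \]
The first step is to simplify $\partial G/\partial V=1-c/V^2+c\rho^2/P^2$ at $V=V^*$. By~\eqref{eq:foc}, $1-c/(V^*)^2=-c\delta\rho^2/P^2$, so
\[ \parfrac{G}{V}\bigg\vert_{V=V^*}=\frac{c\rho^2(1-\delta)}{P^2}>0. \]
This single fact handles most of the cases.

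For (ii), $G$ does not depend on $\delta$ directly, so $\der\cost^*/\der\delta$ has the sign of $\der V^*/\der\delta$. That derivative is negative by Proposition~\ref{prop:steady-state-delta}. For (iii), $\partial G/\partial c=x^*>0$, which holds under~\eqref{eq:steady-state-condition-primitives} by the footnote and Lemma~\ref{lem:steady-state-condition-primitives}. Also $\der V^*/\der c>0$ by Proposition~\ref{prop:steady-state-c}, so both terms are positive.

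For (iv), note that $\partial G/\partial\vm=c/P^2>0$. I will differentiate~\eqref{eq:foc} implicitly in $\vm$, giving $f'(V^*)\,\der V^*/\der\vm+2\delta\rho^2/P^3=0$. Since $f'(V^*)<0$ by the proof of Lemma~\ref{lem:optimal-policy}, this yields $\der V^*/\der\vm>0$, and again both terms are positive.

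Part (i) is the main obstacle, because $\der V^*/\der\rho$ can be negative by Proposition~\ref{prop:steady-state-rho}. Here $\partial G/\partial\rho=2c\rho V^*/P^2>0$, so
\[ \derfrac{\cost^*}{\rho}=\frac{2c\rho}{P^2}\left(V^*+\frac{\rho(1-\delta)}{2}\derfrac{V^*}{\rho}\right), \]
and it suffices to show the bracket is positive. From the proof of Proposition~\ref{prop:steady-state-rho}, $D\,\der V^*/\der\rho=\delta\rho(\rho^2V^*-\vm)/P^3$, where $D\equiv 1/(V^*)^3-\delta\rho^4/P^3>0$. Multiplying the bracket by $D$ reduces the claim to
\[ V^*D+\frac{(1-\delta)\delta\rho^2(\rho^2V^*-\vm)}{2P^3}>0. \]
The key simplification uses~\eqref{eq:foc} once more: $1/(V^*)^2=1/c+\delta\rho^2/P^2$. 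Hence
\[ V^*D=\frac{1}{c}+\frac{\delta\rho^2}{P^2}-\frac{\delta\rho^4V^*}{P^3}=\frac{1}{c}+\frac{\delta\rho^2\vm}{P^3}. \]
The second term of the reduced claim is at least $-(1-\delta)\delta\rho^2\vm/(2P^3)$, which is strictly smaller in magnitude than $\delta\rho^2\vm/P^3$ because $(1-\delta)/2<1$. So the sum exceeds $1/c>0$, and $\cost^*$ is increasing in $\rho$.
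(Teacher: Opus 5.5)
Your proof is correct. For (ii)--(iv) it is the same as the paper's proof. The paper also uses \eqref{eq:foc} to collapse the coefficient multiplying the derivative of $V^*$ to $c(1-\delta)\rho^2/(P^*)^2$, and then signs $\der V^*/\der\delta$, $\der V^*/\der c$ and $\der V^*/\der\vm$ exactly as you do.

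Part (i) is where the routes differ. The paper changes variables to $P^*=\rho^2V^*+\vm$ and writes $\der\cost^*/\der\rho=\frac{c}{(P^*)^2}\bigl((1-\delta)\derfrac{P^*}{\rho}+2\rho\delta V^*\bigr)$. It then differentiates \eqref{eq:foc}, rewritten in terms of $P^*$, a second time to get $\derfrac{P^*}{\rho}=-\frac{2\rho}{f'(V^*)}\bigl(\frac{1}{(V^*)^2}+\frac{1}{c}\bigr)>0$. You instead reuse the implicit derivative from Proposition~\ref{prop:steady-state-rho} and bound your bracket directly, using the identity $V^*D=1/c+\delta\rho^2\vm/(P^*)^3$.

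Since $2\rho$ times your bracket equals $(1-\delta)\der P^*/\der\rho+2\delta\rho V^*$, the paper proves a stronger intermediate fact. The steady-state prediction variance rises with $\rho$ even where $V^*$ falls, which has economic content of its own. Your route is shorter and avoids a second implicit differentiation. Your identity also recovers the paper's fact in one line, since $D\,\der P^*/\der\rho=2\rho V^*D+\rho^2D\,\der V^*/\der\rho=2\rho/c+\delta\rho^3/(P^*)^2$.

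One minor wording point: at $\delta=0$ the two quantities you compare both vanish. So ``strictly smaller in magnitude'' and ``exceeds $1/c$'' should be weak inequalities there; positivity of the bracket is unaffected. Similarly, in (iv) you only get $\der V^*/\der\vm\ge0$ when $\delta=0$. That still suffices because $\partial G/\partial\vm=c/(P^*)^2>0$, and the paper makes the same slip.
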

\ifbodyproofs\subsection{Proof of Proposition~\ref{prop:steady-state-cost}}

\begin{proof}[\unskip\nopunct]
    Define~$P^*\equiv\rho^2V^*+\vm>0$ for convenience.
    I establish~(i)--(iv) separately:
    \begin{enumerate}

        \item[(i)]
        Differentiating~$P^*$ with respect to~$\rho$ gives
        \[ \derfrac{P^*}{\rho}=2\rho V^*+\rho^2\derfrac{V^*}{\rho} \]
        and so
        \[ \derfrac{V^*}{\rho}=\frac{1}{\rho^2}\left(\derfrac{P^*}{\rho}-2\rho V^*\right). \]
        We also have
        \[ \derfrac{x^*}{\rho}=-\frac{1}{(V^*)^2}\derfrac{V^*}{\rho}+\frac{1}{(P^*)^2}\derfrac{P^*}{\rho}. \]
        Thus
        \begin{align*}
            \derfrac{\cost^*}{\rho}
            &= \derfrac{V^*}{\rho}+c\derfrac{x^*}{\rho} \\
            &= \frac{1}{\rho^2}\left(\derfrac{P^*}{\rho}-2\rho V^*\right)+c\left(-\frac{1}{(V^*)^2}\left(\frac{1}{\rho^2}\left(\derfrac{P^*}{\rho}-2\rho V^*\right)\right)+\frac{1}{(P^*)^2}\derfrac{P^*}{\rho}\right) \\
            % &= \frac{1}{\rho^2}\left(\derfrac{P^*}{\rho}-2\rho V^*-\frac{c}{(V^*)^2}\left(\derfrac{P^*}{\rho}-2\rho V^*\right)+\frac{c\rho^2}{(P^*)^2}\derfrac{P^*}{\rho}\right) \\
            &= \frac{c}{\rho^2}\left(\left(\frac{1}{c}-\left(\frac{1}{(V^*)^2}-\frac{\rho^2}{(P^*)^2}\right)\right)\derfrac{P^*}{\rho}-2\rho V^*\left(\frac{1}{c}-\frac{1}{(V^*)^2}\right)\right).
        \end{align*}
        Substituting in the definition~\eqref{eq:foc} of~$V^*$ and simplifying the result gives
        \[ \derfrac{\cost^*}{\rho}=\frac{c}{(P^*)^2}\left((1-\delta)\derfrac{P^*}{\rho}+2\rho\delta V^*\right). \]
        But~$c/(P^*)^2$ and $(1-\delta)$ are strictly positive, and~$2\rho\delta V^*$ is non-negative.
        Therefore, it suffices to show~$\der P^*/\der\rho$ is strictly positive.
        Substituting~$V^*\equiv(P^*-\vm)/\rho^2$ into~\eqref{eq:foc} gives
        \[ \frac{1}{c}=\frac{\rho^4}{(P^*-\vm)^2}-\frac{\delta\rho^2}{(P^*)^2}. \]
        Differentiating implicitly with respect to~$\rho$ and rearranging the result gives
        \[ -2\left(\frac{\rho^4}{\left(P^*-\vm\right)^3}-\frac{\delta\rho^2}{(P^*)^3}\right)\derfrac{P^*}{\rho}=-2\left(\frac{2\rho^3}{\left(P^*-\vm\right)^2}-\frac{\delta\rho}{(P^*)^2}\right), \]
        which we can also write as
        \begin{equation}
            \label{eq:dP*/drho}
            -\frac{2}{\rho^2}\left(\frac{1}{(V^*)^3}-\frac{\delta\rho^4}{(P^*)^3}\right)\derfrac{P^*}{\rho}=-\frac{2}{\rho}\left(\frac{2}{(V^*)^2}-\frac{\delta\rho^2}{(P^*)^2}\right).
        \end{equation}
        Now consider the function~$f$ defined in the proof of Lemma~\ref{lem:optimal-policy}.
        Part of that proof showed
        \[ f'(V^*)=-2\left(\frac{1}{(V^*)^3}-\frac{\delta\rho^4}{(P^*)^3}\right) \]
        is strictly negative.
        Moreover, the definitions of~$V^*$ and~$P^*$ imply
        \[ \frac{\delta\rho^2}{(P^*)^2}=\frac{1}{(V^*)^2}-\frac{1}{c} \]
        Substituting these expressions into~\eqref{eq:dP*/drho} and rearranging the result gives
        \[ \derfrac{P^*}{\rho}=-\frac{2\rho}{f'(V^*)}\left(\frac{1}{(V^*)^2}+\frac{1}{c}\right). \]
        The right-hand side is strictly positive, so~$P^*$ (and thus~$\cost^*$) is increasing in~$\rho$.

        \item[(ii)]
        From the proof of Proposition~\ref{prop:steady-state-delta}, we know~$\der V^*/\der\delta$ is strictly negative and
        \[ \derfrac{x^*}{\delta}=-\left(\frac{1}{(V^*)^2}-\frac{\rho^2}{(P^*)^2}\right)\derfrac{V^*}{\delta}. \]
        So differentiating~$\cost^*$ with respect to~$\delta$ gives
        \begin{align*}
            \derfrac{\cost^*}{\delta}
            &= \derfrac{V^*}{\delta}+c\derfrac{x^*}{\delta} \\
            &= c\left(\frac{1}{c}-\left(\frac{1}{(V^*)^2}-\frac{\rho^2}{(P^*)^2}\right)\right)\derfrac{V^*}{\delta} \\
            % &= c\left(\left(\frac{1}{(V^*)^2}-\frac{\delta\rho^2}{(P^*)^2}\right)-\left(\frac{1}{(V^*)^2}-\frac{\rho^2}{(P^*)^2}\right)\right)\derfrac{V^*}{\delta} \\
            &= \frac{c(1-\delta)\rho^2}{(P^*)^2}\derfrac{V^*}{\delta},
        \end{align*}
        where the last equality follows from the definition~\eqref{eq:foc} of~$V^*$.
        So~$\der\cost^*/\der \delta$ and~$\der V^*/\der\delta$ have the same sign, and hence~$\cost^*$ is decreasing in~$\delta$.

        \item[(iii)]
        From the proof of Proposition~\ref{prop:steady-state-c}, we know~$\der V^*/\der c$ is strictly positive and
        \[ \derfrac{x^*}{c}=-\left(\frac{1}{(V^*)^2}-\frac{\rho^2}{(P^*)^2}\right)\derfrac{V^*}{c}. \]
        So differentiating~$\cost^*$ with respect to~$c$ gives
        \begin{align*}
            \derfrac{\cost^*}{c}
            &= \derfrac{V^*}{c}+x^*+c\derfrac{x^*}{c} \\
            &= c\left(\frac{1}{c}-\left(\frac{1}{(V^*)^2}-\frac{\rho^2}{(P^*)^2}\right)\right)\derfrac{V^*}{c}+x^* \\
            &= \frac{c(1-\delta)\rho^2}{(P^*)^2}\derfrac{V^*}{c}+x^*,
        \end{align*}
        where the last equality follows from the definition~\eqref{eq:foc} of~$V^*$.
        All terms on the right-hand side are strictly positive, so~$\cost^*$ is increasing in~$c$.

        \item[(iv)]
        Differentiating~$P^*$ with respect to~$\vm$ gives
        \[ \derfrac{P^*}{\vm}=\rho^2\derfrac{V^*}{\vm}+1 \]
        and so
        \[ \derfrac{V^*}{\vm}=\frac{1}{\rho^2}\left(\derfrac{P^*}{\vm}-1\right). \]
        Repeating the argument used in the proof of part~(i) gives
        \begin{align*}
            \derfrac{\cost^*}{\vm}
            &= \frac{c}{(P^*)^2}\left((1-\delta)\derfrac{P^*}{\vm}+\delta\right) \\
            &= \frac{c}{(P^*)^2}\left((1-\delta)\rho^2\derfrac{V^*}{\vm}+1\right).
        \end{align*}
        Now~$c/(P^*)^2$ and~$(1-\delta)\rho^2$ are strictly positive, and so it suffices to show~$\der V^*/\der\vm$ is strictly positive.
        Differentiating~\eqref{eq:foc} with respect to~$\vm$ and rearranging the result gives
        \[ -f'(V^*)\derfrac{V^*}{\vm}=\frac{2\delta\rho^2}{(P^*)^3}, \]
        where~$f$ is the function defined in the proof of Lemma~\ref{lem:optimal-policy}.
        But~$f'(V^*)$ is strictly negative and the right-hand side is strictly positive, so~$\der V^*/\der\vm$ must also be strictly positive.
        Thus~$V^*$ (and therefore~$\cost^*$) is increasing in~$\vm$.
        \qedhere

    \end{enumerate}
\end{proof}
\fi

Part~(i) of Proposition~\ref{prop:steady-state-cost} says the agent is worse off when the state is more persistent.
This is because he optimally acquires more information, and the acquisition cost~$cx^*$ rises by more than the instrumental cost~$V^*$ falls.
Part~(ii) says he is better off when he is more patient.
This is because his past selves acquire more information, lowering~$V^*$ enough to offset the rise in~$cx^*$.
Parts~(iii) and~(iv) say he is worse off when information is more expensive and when shocks have higher variance.
This is because he has to buy more information to maintain a given posterior variance.

{
\raggedright
\bibliographystyle{apalike}
\bibliography{references}

\begin{thebibliography}{}

\bibitem[Barilla, 2025]{Barilla-2025-}
Barilla, C. (2025).
\newblock When and what to learn in a changing world.
\newblock {{arXiv}} preprint 2510.17757.

\bibitem[Colombo et~al., 2014]{Colombo-etal-2014-REStud}
Colombo, L., Femminis, G., and Pavan, A. (2014).
\newblock Information {{Acquisition}} and {{Welfare}}.
\newblock {\em Review of Economic Studies}, 81(4):1438--1483.

\bibitem[Davies, 2026]{Davies-2026-}
Davies, B. (2026).
\newblock Learning about a changing state.
\newblock {{arXiv}} preprint 2401.03607.

\bibitem[Hellwig and Veldkamp, 2009]{Hellwig-Veldkamp-2009-REStud}
Hellwig, C. and Veldkamp, L. (2009).
\newblock Knowing {{What Others Know}}: {{Coordination Motives}} in {{Information Acquisition}}.
\newblock {\em Review of Economic Studies}, 76(1):223--251.

\bibitem[Immorlica et~al., 2021]{Immorlica-etal-2021-LeibnizInt.Proc.Inform.LIPIcs}
Immorlica, N., Kash, I.~A., and Lucier, B. (2021).
\newblock Buying {{Data Over Time}}: {{Approximately Optimal Strategies}} for {{Dynamic Data-Driven Decisions}}.
\newblock In {\em Leibniz {{International Proceedings}} in {{Informatics}} ({{LIPIcs}})}, volume 185, pages 77:1--77:14, Dagstuhl, Germany.

\bibitem[Ma{\'c}kowiak and Wiederholt, 2009]{Mackowiak-Wiederholt-2009-AER}
Ma{\'c}kowiak, B. and Wiederholt, M. (2009).
\newblock Optimal {{Sticky Prices}} under {{Rational Inattention}}.
\newblock {\em American Economic Review}, 99(3):769--803.

\bibitem[Myatt and Wallace, 2012]{Myatt-Wallace-2012-REStud}
Myatt, D.~P. and Wallace, C. (2012).
\newblock Endogenous {{Information Acquisition}} in {{Coordination Games}}.
\newblock {\em Review of Economic Studies}, 79(1):340--374.

\bibitem[Pomatto et~al., 2023]{Pomatto-etal-2023-AER}
Pomatto, L., Strack, P., and Tamuz, O. (2023).
\newblock The {{Cost}} of {{Information}}: {{The Case}} of {{Constant Marginal Costs}}.
\newblock {\em American Economic Review}, 113(5):1360--1393.

\bibitem[Sims, 2003]{Sims-2003-JME}
Sims, C.~A. (2003).
\newblock Implications of rational inattention.
\newblock {\em Journal of Monetary Economics}, 50(3):665--690.

\bibitem[Steiner et~al., 2017]{Steiner-etal-2017-ECTA}
Steiner, J., Stewart, C., and Mat{\v e}jka, F. (2017).
\newblock Rational {{Inattention Dynamics}}: {{Inertia}} and {{Delay}} in {{Decision-Making}}.
\newblock {\em Econometrica}, 85(2):521--553.

\bibitem[Weber and Nguyen, 2018]{Weber-Nguyen-2018-EJOR}
Weber, T.~A. and Nguyen, V.~A. (2018).
\newblock A linear-quadratic {{Gaussian}} approach to dynamic information acquisition.
\newblock {\em European Journal of Operational Research}, 270(1):260--281.

\end{thebibliography}
}

\appendix

\clearpage
\section{Proofs}
\label{sec:proofs}

\counterwithin{equation}{section}
\counterwithin{lemma}{section}
\counterwithin{proposition}{section}
\setcounter{equation}{0}
\setcounter{lemma}{0}
\setcounter{proposition}{0}

\ifbodyproofs\else\fi

\ifbodyproofs\else\fi

\ifbodyproofs\else\fi

\ifbodyproofs\else\fi

\ifbodyproofs\else\fi

\ifbodyproofs\else\fi

\end{document}